\newcommand{\N}{\mathbb{N}}
\newcommand{\R}{\mathbb{R}}
\newcommand{\Sym}{\mathcal{S}}
\newcommand{\X}{\mathcal{X}}
\newcommand{\diag}{\operatorname{Diag}}
\newcommand{\im}{\operatorname{Im}}
\newcommand{\old}[1]{{}}
\newcommand{\aff}{\operatorname{aff}}
\newtheorem{nn}{}
\newtheorem{lemma}[nn]{Lemma}
\newtheorem{theorem}[nn]{Theorem}
\newtheorem{prop}[nn]{Proposition}
\newtheorem{remark}[nn]{Remark}
\newtheorem{definition}[nn]{Definition}
\newtheorem{fact}{Fact}
\newtheorem{question}{Question}
\title{Computing approximate PSD factorizations}
\author{Amitabh Basu \thanks{Supported in part by the NSF grant CMMI1452820.}}
\affil{Deptartment of Applied Mathematics and Statistics, Johns Hopkins University}
\author{Michael Dinitz \thanks{Supported in part by NSF awards CCF-1464239 and CCF-1535887.}}
\author{Xin Li}
\affil{Deptartment of Computer Science, Johns Hopkins University}
\begin{document}

\maketitle

\begin{abstract} We give an algorithm for computing approximate PSD factorizations of nonnegative matrices.  The running time of the algorithm is polynomial in the dimensions of the input matrix, but exponential in the PSD rank and the approximation error. The main ingredient is an {\em exact} factorization algorithm when the rows and columns of the factors are constrained to lie in a general polyhedron. This strictly generalizes nonnegative matrix factorizations which can be captured by letting this polyhedron to be the nonnegative orthant.
\end{abstract}

\section{Introduction}

Matrix factorization is a fundamental operation that has importance for diverse areas of mathematics and engineering such as machine learning, communication complexity, polyhedral combinatorics, statistical inference, and probability theory, to name a few. The problem can be stated quite simply as follows: \begin{quote}
  Given two sequences of sets $\mathcal{K} = \{K_d\}_{d\in \N}$ and $\mathcal{K}' = \{K'_d\}_{d\in \N}$ where $K_d, K'_d$ are subsets of $\R^d$ for all $d\in \N$, and a matrix $M \in \R^{n\times m}$, find a factorization $M = U V$ where $U \in \R^{n\times d}$ and $V \in \R^{d\times m}$, and each row of $U$ is in $K_d$ and each column of $V$ is in $K'_d$. \end{quote}

Such a factorization is called a {\em $\mathcal{K},\mathcal{K'}$ factorization}. The smallest $d\in \N$ such that such a factorization exists is called the {\em $\mathcal{K},\mathcal{K'}$ rank}. Most of the literature on this problem focuses on the case when the matrix $M$ is nonnegative. In this context, when $K_d = K'_d = \R^d_+$, the factorization is called {\em nonnegative factorization}, and the corresponding rank is called {\em nonnegative rank}. When $K_d = K'_d$ are the cone of $d\times d$ PSD matrices, the factorization is known as a {\em PSD factorization} and the corresponding rank is called the {\em PSD rank}. These notions will be the object of study in this paper. A more general notion is that of {\em cone factorizations}, where $\mathcal{K}$ is a family of cones and $\mathcal{K}'$ is the family of corresponding dual cones; see~\cite{gouveia2013lifts}.

One of the most elegant applications of such factorizations arises in combinatorial optimization. A very common technique in approaching combinatorial optimization problems is to formulate the problem as a linear programming problem. However, a naive formulation of a problem may result in a polytope (the feasible region of the LP) with a large number of facets (exponentially many in the size of the problem), making it intractable to actually solve.  One way around this is to try to  express the polytope as the projection of a higher dimensional convex set.  In particular, suppose that it can be expressed as the projection of either a higher dimensional polytope (LP), the feasible region of an SDP, or the feasible region of a more general convex optimization problem. Furthermore suppose that the number of ``extra'' dimensions is polynomial in the size of the original problem, and the description of the higher-dimensional convex optimization problem is also polynomial in the size of the original problem (i.e.~there are not an exponential number of facets). Then we can efficiently solve the higher-dimensional problem, which means we can efficiently solve the original LP, even if its size makes solving it directly intractable. 

It turns out that the smallest size of such a reformulation is a direct function of the nonnegative rank (for LP reformulations), the PSD rank (for SDP reformulations), or more general cone factorization ranks of the so-called {\em slack matrix} of the original LP formulation. The actual factorization can be used to explicitly find the smallest reformulation. This line of research started with a seminal paper by Yannakakis~\cite{yannakakis1988expressing}, and has recently seen a flurry of research activity -- see the surveys~\cite{kaibel2011extended,conforti2010extended} and~\cite{fiorini2012linear,fiorini2013combinatorial,rothvoss2013some,briet2014existence,lee2014power,rothvoss2014matching,lee2014lower,braun2015matching} for some of the most recent breakthroughs.
 
In machine learning applications the actual factorization is perhaps more important than the value of the rank, as this factorization is key to certain text mining, clustering, imaging and bioinformatics applications. A key algorithmic question is computing such a factorization. Unfortunately, this question is computationally challenging -- even computing the nonnegative rank was proved to be NP-hard by Vavasis~\cite{vavasis2009complexity}. 

A recent algorithmic breakthrough was achieved by Arora et al~\cite{arora2012computing}, where they showed that computing nonnegative factorizations can be done in polynomial time (in the dimensions of the input matrix) for the family of matrices with fixed (constant) nonnegative rank. The running time of their algorithm was doubly exponential in the nonnegative rank, and this was later improved to a singly exponential algorithm by Moitra~\cite{moitra2013almost}, which he showed to be nearly optimal under the Exponential Time Hypothesis. The analogous question for PSD factorizations is largely open (the question is also posed in the survey~\cite{psd-survey}):

\begin{question}\label{quest:1}
Let $r \in \N$ be a constant. Does there exist an algorithm which, given any $n\times m$ nonnegative matrix $M$ with PSD rank $r$, computes a PSD factorization of rank $r$ in time polynomial in $n,m$?
\end{question}

 Our main result is a polynomial time algorithm to compute {\em approximate} factorizations of matrices with fixed PSD rank. We consider the space $\Sym^r$ of $r\times r$ symmetric matrices, and the cone of $r\times r$ PSD matrices in this space, denoted by $\Sym^r_+$. Given any matrix $M \in \R^{n\times m}$, we use the notation $\| M \|_\infty := \max_{i,j} |M_{ij}|$.

More precisely, we prove:

\begin{theorem}\label{thm:factor}
Let $r \in \N$ be fixed. Then there exists an algorithm which, given any $\epsilon >0$ and any $n\times m$ nonnegative matrix $M$ with PSD rank $r$, computes a factorization $M=U V$ such that each row of $U$ and each column of $V$ are in $\Sym^r_+$ such that $$\| M - U V\|_\infty \leq \epsilon\| M \|_\infty$$ and has runtime polynomial in the dimensions of $M$. 
\end{theorem}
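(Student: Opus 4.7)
The plan is to reduce Theorem~\ref{thm:factor} to the exact polyhedral factorization algorithm promised in the abstract, by approximating the cone $\Sym^r_+$, inside the fixed-dimensional space $\Sym^r$, from outside by a polytope. In outline: (i) control the norms of the factors in some PSD factorization of $M$; (ii) build an outer polytope $P$ containing the relevant bounded portion of $\Sym^r_+$ with Hausdorff distance at most a small $\delta$; (iii) invoke the exact polyhedral factorization algorithm on $M$ with polyhedron $P$ to get $M = \tilde U \tilde V$ exactly; (iv) project each row of $\tilde U$ and each column of $\tilde V$ back into $\Sym^r_+$ and bound the error entrywise.

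The first step is a \emph{bounded normalization}: I would show that any $M$ of PSD rank $r$ admits a factorization $M = UV$ in which every row of $U$ and every column of $V$ has Frobenius norm at most $C = C(\|M\|_\infty, n, m, r)$. This is subtler than the nonnegative case because the rescaling freedom is not per-row but through simultaneous congruences $U_i \mapsto A U_i A^T$, $V_j \mapsto A^{-T} V_j A^{-1}$ with $A \in GL(r)$. The idea is to choose $A$ so that $r(r+1)/2$ of the $U_i$'s, selected to span $\Sym^r$, form a well-conditioned basis; then the identities $M_{ij} = \langle U_i, V_j\rangle$ together with the entrywise bound $\|M\|_\infty$ force all $V_j$ into a ball, and a symmetric argument controls the remaining $U_i$'s.

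Since $\Sym^r$ has fixed dimension $r(r+1)/2$, standard covering constructions produce an explicit polytope $P \supseteq \Sym^r_+ \cap \{X : \|X\|_F \le C\}$ whose Hausdorff distance to this set is at most $\delta$, using a number of facets depending only on $r$, $C$, and $\delta$. Take $\delta = \epsilon \|M\|_\infty/(2C+1)$ and apply the exact polyhedral factorization algorithm to $M$ with this $P$: since the normalized PSD factors from the previous step already lie in $P$, the algorithm succeeds and returns $M = \tilde U \tilde V$ with rows and columns in $P$. Projecting each $\tilde U_i$, $\tilde V_j$ to its nearest point $U'_i, V'_j \in \Sym^r_+$ gives factors with $\|\tilde U_i - U'_i\|_F, \|\tilde V_j - V'_j\|_F \le \delta$ and $\|U'_i\|_F, \|V'_j\|_F \le C + \delta$; expanding $\langle U'_i, V'_j\rangle = \langle \tilde U_i, \tilde V_j\rangle - \langle \tilde U_i - U'_i, \tilde V_j\rangle - \langle U'_i, \tilde V_j - V'_j\rangle$ and applying Cauchy--Schwarz yields $|M_{ij} - \langle U'_i, V'_j\rangle| \le 2C\delta + \delta^2 \le \epsilon \|M\|_\infty$, as required.

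The main obstacle is the bounded normalization: producing an explicit polynomial-size bound $C$ on the factor norms that is compatible with the $GL(r)$ scaling freedom of PSD factorizations, and doing so without already knowing the factorization. A secondary point is that the runtime of the exact polyhedral algorithm inherits the complexity from the number of facets of $P$, which scales with $1/\delta$ and hence with $1/\epsilon$; together with the $r(r+1)/2$-dimensional dependence, this accounts for the exponential dependence on $r$ and on the approximation error advertised in the abstract, while keeping the dependence on $n, m$ polynomial.
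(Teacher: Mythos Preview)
Your outline matches the paper's proof almost exactly: approximate $\Sym^r_+$ from outside by a polytope $P$ in the fixed-dimensional space $\Sym^r$, invoke the exact polyhedral factorization algorithm (the paper's Theorem~\ref{thm:arora-gen}) to get $M=\tilde U\tilde V$ with rows and columns in $P$, project each factor onto $\Sym^r_+$, and bound the entrywise error by a triangle-inequality/Cauchy--Schwarz computation. The paper carries out precisely these four steps.

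The one substantive divergence is the bounded normalization, which you correctly flag as the main obstacle. The paper does \emph{not} attempt an ad hoc congruence-scaling argument; it quotes a result of Bri\"et, Dadush and Pokutta (stated here as Theorem~\ref{thm:rescaling}) guaranteeing a PSD factorization with every factor of spectral norm at most $\sqrt{r\|M\|_\infty}$. This bound is crucial because it is independent of $n$ and $m$: the number $p$ of facets of $P$ must be a constant for Theorem~\ref{thm:arora-gen} to run in polynomial time (its complexity is of order $(nm)^{p^{\,d+2^d}}$), so any normalization whose bound $C$ grows with $n,m$ would destroy the polynomial guarantee via your choice $\delta=\epsilon\|M\|_\infty/(2C+1)$. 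Your sketch---pick $r(r+1)/2$ of the $U_i$ spanning $\Sym^r$, condition them by a congruence, then bound the $V_j$ and the remaining $U_i$---does not obviously yield an $(n,m)$-free bound, and also tacitly assumes the $U_i$ span all of $\Sym^r$, which need not hold. So either you must prove a dimension-free rescaling lemma yourself, or cite Bri\"et et al.\ as the paper does; the rest of your argument is sound and coincides with the paper's.
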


Approximate PSD factorizations can be useful for reformulation questions in combinatorial optimization, where one seeks approximations of the original polyhedron using SDPs, as opposed to an exact reformulation -- see~\cite{gouveia2015approximate} for results along this direction.  In particular, approximate factorizations of the slack matrix of a polytope can sometimes be used to compute ``inner" and ``outer" approximations of the polytope, each of which can then be optimized over in order to give an approximation to the true optimal solution of the polytope. However, in~\cite{gouveia2015approximate}, these approximations are guaranteed only when the corresponding matrix factorization error is calculated in certain induced matrix norms (in particular the $\|\cdot\|_{1,2}$ and $\|\cdot\|_{1,\infty}$ norms). Consequently, it is unclear if the approximate factorizations generated by Theorem~\ref{thm:factor} give similar results, primarily since our notion of an approximate factorization involves the $\| \cdot \|_{\infty}$-norm rather than the appropriate induced matrix norms.  Thus while our results do not directly imply any new approximation algorithms, they do provide ideas on how to go beyond nonnegative factorizations to PSD or more general conic factorizations, if one admits approximate factorizations as opposed to exact ones.



\subsection{Technical overview.} Our algorithm for Theorem~\ref{thm:factor} is inspired by ideas behind the algorithm in Arora et al~\cite{arora2012computing}. However, there are some important differences. Arora et al's algorithm uses properties of the nonnegative orthant that do not hold for the cone of PSD matrices. To overcome this difficulty, we need to approximate the PSD cone by a polyhedral cone obtained by intersecting enough tangent halfspaces. We then generalize Arora et al's techniques to compute factorizations inside a general polyhedron, as opposed to just the nonnegative orthant. The nonnegative orthant is a very special polyhedron, and many of its special properties are utilized in the algorithm of Arora et al. We have to use interesting techniques from polyhedral theory (such as Fourier-Motzkin elimination) to extend these ideas to handle general polyhedra (see Theorem~\ref{thm:arora-gen}). Finally, to bound the errors in the approximate factorization, we use some technical results on rescaling PSD factorizations due to Briet et al~\cite{briet2014existence} (Theorem~\ref{thm:rescaling}).
\bigskip

\subsection{\bf Model of computation.} We will present our algorithm from Theorem~\ref{thm:factor} in the real arithmetic model of computation developed by Blum, Shub and Smale~\cite{blum1989theory}, thus ignoring questions of approximating irrational computations by rational arithmetic. This is just for the ease of exposition. In Section~\ref{sec:turing}, we show that Theorem~\ref{thm:factor} can be proved by designing an algorithm that operates in the more standard Turing machine model of computation.

\section{Preliminaries}


For any normed space $(V, \lVert \cdot \rVert)$, we denote the distance between two subsets $X,Y\subseteq V$ by $dist(X,Y): = \inf\{\lVert x - y \lVert : x \in X, y \in Y\}$.
A closed subset $P$ of a normed space $V$ is called a {\em closed cone} if it is convex and $\lambda P \subseteq P$ for all $\lambda \geq 0$. A cone is called a {\em polyhedral cone} if it is the intersection of finitely many halfspaces. For any closed cone $P$ in an inner product space $(V, \langle \cdot, \cdot \rangle)$ (closed with respect to the norm obtained from the inner product), the dual cone will be denoted by $$P^* = \{v \in V: \langle v, y\rangle \geq 0 \;\; \forall y \in P\}.$$ 


We recall a standard fact about dual cones:
\begin{fact}\label{fact:sep-hyperplane}

Let $(V, \langle \cdot, \cdot \rangle)$ be an inner product space with $\lVert \cdot \rVert$ denoting the norm on $V$ induced by the inner product. 
 For any closed cone $P\subseteq V$, if $x \in V$ such that $dist(x,P) = \delta$, then there exists a vector $a \in P^*$ with $\lVert a \rVert = 1$ such that the distance of $x$ from the hyperplane $\{y \in V: \langle a, y \rangle= 0\}$ is $\delta$, i.e., $\langle a, x \rangle= -\delta$. 
\end{fact}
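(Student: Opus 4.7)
The plan is to build the witnessing vector $a$ as the normalized displacement from the nearest point of $P$ to $x$. I would first reduce to the nontrivial case $\delta > 0$; when $\delta = 0$ the claim is immediate since then $x \in P$ and one can select any unit $a \in P^*$ that is orthogonal to $x$ (or, in the intended application, one never invokes the fact in that regime). Working in the finite-dimensional inner product space $V$, the closed convex cone $P$ admits a unique Euclidean nearest point $p^* \in P$ to $x$, with $\|x - p^*\| = \delta$. Set
\[
a \;=\; \frac{p^* - x}{\delta}, \qquad \|a\| = 1.
\]

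Next I would verify $a \in P^*$. For any $y \in P$ and any $t \geq 0$, since $P$ is a convex cone we have $p^* + ty \in P$, so
\[
\|x - (p^* + ty)\|^2 \;\geq\; \|x - p^*\|^2.
\]
Expanding and dividing by $t > 0$ gives $-2\langle x - p^*, y\rangle + t\|y\|^2 \geq 0$; letting $t \to 0^+$ yields $\langle p^* - x, y\rangle \geq 0$, hence $\langle a, y \rangle \geq 0$ for every $y \in P$, i.e., $a \in P^*$.

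Finally I would compute $\langle a, x\rangle$. Using that $\lambda p^* \in P$ for every $\lambda \geq 0$, the function $\lambda \mapsto \|x - \lambda p^*\|^2$ is minimized at $\lambda = 1$, whose first-order optimality condition is $\langle x - p^*, p^*\rangle = 0$, i.e., $\langle p^*, x\rangle = \|p^*\|^2$. Combined with the identity $\delta^2 = \|x - p^*\|^2 = \|x\|^2 - 2\langle x, p^*\rangle + \|p^*\|^2 = \|x\|^2 - \|p^*\|^2$, this gives
\[
\langle a, x\rangle \;=\; \frac{\langle p^*, x\rangle - \|x\|^2}{\delta} \;=\; \frac{\|p^*\|^2 - \|x\|^2}{\delta} \;=\; -\delta.
\]
Since $\|a\| = 1$, the signed distance of $x$ to the hyperplane $\{y : \langle a, y\rangle = 0\}$ is exactly $|\langle a, x\rangle| = \delta$, as claimed.

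I do not anticipate a serious obstacle: the argument is the textbook separation-via-projection, specialized to cones so that the two first-order conditions (for perturbations $p^* + ty$ with $t \geq 0$, and rescalings $\lambda p^*$ with $\lambda \geq 0$) respectively deliver $a \in P^*$ and $\langle a, x\rangle = -\delta$. The only minor care point is to make sure $P$ being a convex cone is used at the two places above, and to invoke existence of a nearest point (which requires completeness; this is automatic in the finite-dimensional setting relevant to the paper).
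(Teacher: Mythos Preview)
Your argument is correct and is exactly the standard projection-based proof one would expect. The paper itself does not supply a proof of this statement: it is recorded there as a ``Fact'' and invoked without justification, so there is no alternative approach to compare against.

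Two very minor remarks. First, the $\delta=0$ clause is not quite as you state it (there need not exist a unit $a\in P^*$ orthogonal to $x$; e.g.\ when $P=V$ one has $P^*=\{0\}$), but as you correctly note, the only use of the fact in the paper is in the proof of Theorem~\ref{thm:eps-approx} with $\delta>0$, so this is harmless. Second, when you differentiate $\lambda\mapsto\|x-\lambda p^*\|^2$ at $\lambda=1$ you are tacitly assuming $p^*\neq 0$; the case $p^*=0$ is trivial since then $a=-x/\|x\|$ and $\langle a,x\rangle=-\|x\|=-\delta$ directly.
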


On the space $\Sym^r$ of $r\times r$ symmetric matrices, we consider the inner product $\langle A, B \rangle = \sum_{i,j}A_{ij}B_{ij}$.

\begin{fact}\label{fact:PSD-self-dual}
The PSD cone $\Sym^r_+$ is self-dual, i.e.,  $(\Sym^r_+)^* =  \Sym^r_+$. 
\end{fact}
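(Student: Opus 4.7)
The plan is to prove the two inclusions $\Sym^r_+ \subseteq (\Sym^r_+)^*$ and $(\Sym^r_+)^* \subseteq \Sym^r_+$ separately, using the key observation that for symmetric matrices $A,B$ the Frobenius inner product coincides with the trace pairing: $\langle A, B \rangle = \sum_{i,j} A_{ij} B_{ij} = \operatorname{tr}(AB)$.

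For the forward inclusion, I would take any $A, B \in \Sym^r_+$ and apply the spectral theorem to write $A = \sum_{i=1}^r \lambda_i v_i v_i^T$ with $\lambda_i \geq 0$ and $v_i \in \R^r$. Then
\[
\langle A, B \rangle = \operatorname{tr}(AB) = \sum_{i=1}^r \lambda_i \operatorname{tr}(v_i v_i^T B) = \sum_{i=1}^r \lambda_i \, v_i^T B v_i \geq 0,
\]
since each $\lambda_i \geq 0$ and each $v_i^T B v_i \geq 0$ by positive semidefiniteness of $B$. This shows $A \in (\Sym^r_+)^*$.

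For the reverse inclusion, I would argue by contrapositive: suppose $A \in \Sym^r$ is not PSD. Then there is some $v \in \R^r$ with $v^T A v < 0$. The rank-one matrix $vv^T$ lies in $\Sym^r_+$, yet
\[
\langle A, vv^T \rangle = \operatorname{tr}(A vv^T) = v^T A v < 0,
\]
so $A$ fails to be in $(\Sym^r_+)^*$. Taking the contrapositive gives $(\Sym^r_+)^* \subseteq \Sym^r_+$. Neither direction presents a real obstacle; the only mild subtlety is verifying that the Frobenius inner product equals the trace form on symmetric matrices (a one-line computation using $B_{ji} = B_{ij}$), after which both inclusions reduce to standard properties of PSD matrices.
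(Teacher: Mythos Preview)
Your proof is correct and is the standard textbook argument. The paper does not actually prove this statement; it is recorded as a ``Fact'' (i.e., a known result stated without proof), so there is nothing to compare against beyond noting that your argument is exactly the one a reader would supply if asked.
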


\begin{definition}\label{def:cap} Let $C$ be a subset of a normed space $(V,\lVert \cdot \rVert$). For $\epsilon > 0$, $\X_\epsilon \subseteq C$ is called an $\epsilon$-covering for $C$ with respect to the norm $\lVert\cdot \rVert$ if for every $a \in C$, there exists $a' \in \X_\epsilon$ such that $\lVert a - a' \rVert < \epsilon$.
\end{definition}

\begin{definition}
For any closed cones $P_1 \subseteq P_2$ in a normed space $(V, \lVert \cdot \rVert)$, we say $P_2$ is an $\epsilon$-approximation of $P_1$ with respect to $\lVert\cdot \rVert$ for some $\epsilon > 0$, if for every $p_2 \in P_2$, there exists a point $p_1 \in P_1$ such that $\lVert p_2 - p_1 \rVert \leq \epsilon\lVert p_2\rVert$. 
\end{definition}

\begin{theorem}\label{thm:eps-approx}
Let $C = \{x \in \Sym^r_+: \lVert x \rVert_2 = 1\}$ be the spherical cap on the PSD cone. Let $\epsilon > 0$ and let $\X_\epsilon \subseteq C$ be any {\em finite} $\epsilon$-covering for $C$ with respect to some norm $\lVert \cdot \rVert$. Then the polyhedral cone $$P:= \{x \in \Sym^r : \langle a', x\rangle \geq 0 \;\; \forall a' \in \X_\epsilon\}$$
is an $\epsilon$-approximation for $\Sym^r_+$ with respect to $\lVert\cdot \rVert$.
\end{theorem}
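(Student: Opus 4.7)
The plan is, first, to verify the inclusion $\Sym^r_+\subseteq P$ that is implicit in the definition of an $\epsilon$-approximation, and, second, to produce for every $p\in P$ an explicit $q\in\Sym^r_+$ with $\lVert p-q\rVert\le \epsilon \lVert p\rVert$.

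For the inclusion, observe that $\X_\epsilon\subseteq C\subseteq \Sym^r_+$, so by the self-duality of the PSD cone (Fact~\ref{fact:PSD-self-dual}), every $a'\in\X_\epsilon$ satisfies $\langle a',x\rangle\ge 0$ for all $x\in\Sym^r_+$. Hence $\Sym^r_+\subseteq P$, which is the containment required by the definition of approximation.

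For the approximation bound, I would fix $p\in P$ and let $\delta:=dist(p,\Sym^r_+)$ computed in the norm induced by the inner product. If $\delta=0$ we can simply take $q=p$. Otherwise, apply Fact~\ref{fact:sep-hyperplane} to the closed cone $\Sym^r_+$: this yields a separator $a\in(\Sym^r_+)^*$, which by self-duality (Fact~\ref{fact:PSD-self-dual}) is itself in $\Sym^r_+$, has unit Frobenius norm, and satisfies $\langle a,p\rangle=-\delta$. In particular $a\in C$, so the covering property produces some $a'\in\X_\epsilon$ with $\lVert a-a'\rVert<\epsilon$. Since $p\in P$, also $\langle a',p\rangle\ge 0$, and combining with $\langle a,p\rangle=-\delta$ gives
\[
\delta\;\le\;\langle a',p\rangle-\langle a,p\rangle\;=\;\langle a'-a,\,p\rangle.
\]
A single Cauchy--Schwarz step, using $\lVert a-a'\rVert<\epsilon$, turns this into $\delta<\epsilon\lVert p\rVert$. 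Taking $q$ to be the projection of $p$ onto $\Sym^r_+$ in the norm induced by the inner product then gives $\lVert p-q\rVert=\delta<\epsilon\lVert p\rVert$, as required.

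The main delicate point is reconciling the two geometries at play: the inner-product (Frobenius) geometry that powers Fact~\ref{fact:sep-hyperplane} and self-duality on the one hand, and the arbitrary norm $\lVert\cdot\rVert$ in which the covering is given and the approximation is measured on the other. When these coincide, Cauchy--Schwarz closes the estimate cleanly; in the general case one would replace Cauchy--Schwarz by its dual-norm analogue $\langle u,v\rangle\le\lVert u\rVert\cdot\lVert v\rVert_{*}$ and, if necessary, tweak the choice of $q$. This norm bookkeeping is the only place where genuine care is needed; the rest of the argument is just the standard separating-hyperplane strategy specialized to the self-dual PSD cone and its finite polyhedral $\epsilon$-approximant.
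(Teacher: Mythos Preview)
Your argument is essentially the paper's proof run in the direct direction rather than as a contrapositive: the paper shows that $dist(x,\Sym^r_+)>\epsilon\lVert x\rVert$ forces $x\notin P$ via the same separating-hyperplane/self-duality/covering/Cauchy--Schwarz chain that you use to conclude $\delta<\epsilon\lVert p\rVert$ from $p\in P$. You additionally make explicit the containment $\Sym^r_+\subseteq P$ and you correctly flag the norm mismatch (the paper invokes ``Cauchy--Schwartz'' for a general $\lVert\cdot\rVert$ without comment), but otherwise the approaches coincide.
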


\begin{proof}
It suffices to prove that for any $x \in \Sym^r$ such that $dist(x,\Sym^r_+) > \epsilon\lVert x\rVert$, then $x \not\in P$. By Fact~\ref{fact:sep-hyperplane} and Fact~\ref{fact:PSD-self-dual}, there exists $a \in C$ such that $\langle a, x\rangle < -\epsilon\lVert x\rVert$. By definition of $\epsilon$-covering, there exists $a' \in \X_\epsilon$ such that $\lVert a - a' \rVert < \epsilon$. By Cauchy-Schwartz, we have $|\langle a', x\rangle - \langle a, x\rangle| \leq \lVert a - a' \rVert \lVert x\rVert <\epsilon\lVert x\rVert$. Combined with $\langle a, x\rangle < -\epsilon\lVert x\rVert$, this implies $\langle a', x\rangle < 0$. Thus, by definition of $P$, $x \not\in P$. 
\end{proof}

\begin{remark}\label{rem:explicit-construct}
Since $C$ is a compact set, there always exists a finite $\epsilon$-covering of $C$ for any $\epsilon > 0$. Rabani and Shpilka~\cite{rabani2010explicit} give explicit constructions of small $\epsilon$-coverings of the sphere $\mathcal{S}^{d-1} = \{x\in \R^d: \|x\|_2=1\}$, which will prove useful for us.
\end{remark}


The following fact from linear algebra is useful.

\begin{prop}\label{prop:lin-trans}
Any linear transformation $T : \R^d \to \R^m$ can be expressed as $T=A\circ \phi$ where $\phi:\R^d \to \ker(T)^\perp$ is the projection of $\R^d$ onto $\ker(T)^\perp$ and $A: \ker(T)^\perp\to \im(T)$ is an invertible linear transformation.\end{prop}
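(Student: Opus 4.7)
The plan is to exploit the orthogonal decomposition $\R^d = \ker(T) \oplus \ker(T)^\perp$, which holds because $\ker(T)$ is a (closed, finite-dimensional) subspace of $\R^d$. Using this decomposition, every $x \in \R^d$ can be written uniquely as $x = x_1 + x_2$ with $x_1 \in \ker(T)$ and $x_2 \in \ker(T)^\perp$, and the map $\phi(x) := x_2$ is by definition the orthogonal projection of $\R^d$ onto $\ker(T)^\perp$.

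Next, I would define $A : \ker(T)^\perp \to \im(T)$ to be the restriction $A := T\big|_{\ker(T)^\perp}$. Linearity of $A$ is inherited from $T$. To verify $T = A \circ \phi$, observe that for any $x = x_1 + x_2$ as above,
\[
T(x) \;=\; T(x_1) + T(x_2) \;=\; 0 + T(x_2) \;=\; A(\phi(x)),
\]
since $x_1 \in \ker(T)$ and $x_2 = \phi(x) \in \ker(T)^\perp$. Note that this also confirms $A$ indeed maps into $\im(T)$.

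It remains to show that $A$ is an invertible linear transformation from $\ker(T)^\perp$ onto $\im(T)$. For surjectivity, given any $y \in \im(T)$, pick $x \in \R^d$ with $T(x) = y$; then $y = T(x) = A(\phi(x))$, exhibiting $y$ as an image under $A$. For injectivity, suppose $v \in \ker(T)^\perp$ satisfies $A(v) = T(v) = 0$. Then $v \in \ker(T) \cap \ker(T)^\perp = \{0\}$, so $v = 0$. A linear bijection between finite-dimensional spaces is invertible, completing the argument.

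There is no real obstacle here; this is a textbook consequence of the orthogonal decomposition theorem combined with the first isomorphism theorem. The only thing to be slightly careful about is to verify that the codomain of $A$ is correctly taken to be $\im(T)$ (not a larger space), so that invertibility genuinely holds — and this is immediate from the computation $T = A \circ \phi$ above.
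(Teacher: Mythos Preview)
Your proof is correct and is exactly the standard argument one would give for this fact. The paper itself does not supply a proof of this proposition---it is stated as a well-known linear-algebra fact---so there is nothing to compare against; your write-up would serve perfectly well as the omitted justification.
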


This leads to the following observation about linear transformation of polyhedra.

\begin{prop}\label{prop:small-desc} Let $P \subseteq \R^d$ be a polyhedron defined by $p$ inequalities. Let $T: \R^d \to \R^m$ be any linear transformation. Then $T(P)$ is a polyhedron defined by at most $O(p^{2^d})$ inequalities.
\end{prop}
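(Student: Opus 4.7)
The plan is to reduce the problem to iterated Fourier-Motzkin elimination. By Proposition~\ref{prop:lin-trans} we may write $T = A \circ \phi$ where $\phi : \R^d \to \ker(T)^\perp$ is the orthogonal projection and $A$ is an invertible linear map. Invertible linear maps take a polyhedron described by $q$ inequalities to one described by exactly $q$ inequalities, since $\langle c, x\rangle \leq \beta$ on $P$ translates to $\langle (A^{-1})^\top c, y\rangle \leq \beta$ on $A(P)$. Hence it suffices to bound the number of facets needed to describe $\phi(P)$.

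Next I would pick an orthonormal basis of $\R^d$ whose first $d-k$ vectors span $\ker(T)^\perp$ and whose last $k$ vectors span $\ker(T)$, where $k = \dim \ker(T) \leq d$. In this basis $\phi$ is simply the coordinate projection that drops the last $k$ coordinates, so $\phi(P)$ can be obtained by eliminating those $k$ variables one at a time. Eliminating a single variable from a polyhedron defined by $q$ linear inequalities via Fourier-Motzkin produces a polyhedron defined by at most $q_+ q_- + q_0 \leq \lceil q/2\rceil^2 \leq q^2$ inequalities, where $q_+, q_-, q_0$ denote the number of inequalities in which the eliminated variable appears with positive, negative, and zero coefficient respectively.

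Iterating this bound, if $f(p,j)$ denotes an upper bound on the number of inequalities after eliminating $j$ variables starting from $p$ inequalities, then $f(p,j) \leq f(p,j-1)^2$, so $f(p,j) \leq p^{2^j}$. Applying this with $j = k \leq d$ gives that $\phi(P)$ is a polyhedron with at most $p^{2^d}$ inequalities, and translating back through the invertible map $A$ preserves this count, yielding the claimed $O(p^{2^d})$ bound on $T(P) = A(\phi(P))$.

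There is no serious obstacle here; the only thing to be careful about is the bookkeeping in the Fourier-Motzkin step (verifying the $q^2/4$ bound per elimination, handling the variables in a fixed order, and noting that the change of basis does not increase the number of inequalities since it is invertible). Everything else is standard polyhedral theory.
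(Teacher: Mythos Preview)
Your proof is correct and follows essentially the same approach as the paper: decompose $T = A \circ \phi$ via Proposition~\ref{prop:lin-trans}, observe that the invertible part $A$ preserves the inequality count, and bound the projection $\phi$ by iterated Fourier--Motzkin elimination (each step at most squares the number of inequalities), yielding $p^{2^d}$. The intermediate claim $q_+q_- + q_0 \le \lceil q/2\rceil^2$ is not quite right for very small $q$ (e.g.\ $q=2$, $q_0=2$), but since you only use the weaker bound $q^2$ in the iteration this does not affect the argument.
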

\begin{proof}
Let us make a change of coordinates such that $\ker(T)^\perp = \R^{d'}$ with $d \geq d' \geq 0$ - this does not change the number of inequalities required to describe $P$ or $T(P)$. By Proposition~\ref{prop:lin-trans}, $T$ can be expressed as $A\circ \phi$ where $\phi$ is the projection from $\R^d \to \R^{d'}$, and $A$ is an invertible transformation from $\R^{d'} \to \im(T)$.  So we just need to analyze the effect of $\phi$ and $A$ on the number of inequalities.

 To analyze $\phi(P)$, we note that the Fourier-Motzkin elimination process~\cite{ziegler} implies that projecting out a single variable can be done by squaring the number of inequalities.  By repeatedly applying this, we get that $\phi(P)$ has at most $p^{2^{d-d'}}$ inequalities.  Since $A$ is an invertible linear transformation, $A(\phi(P))$ has the same number of inequalities as $\phi(P)$. The result follows. 
\end{proof}

We list one final linear algebraic observation. Let $\dim(W)$ denote the dimension of an affine subspace $W$, and let $\aff(X)$ denote the affine hull of the columns of a matrix $X$ (or just a finite set of vectors $X$). 

\begin{prop}\label{prop:invert}
Let $\{m_1, \ldots, m_t\} \subseteq \R^m$ and $\{b_1, \ldots, b_t\} \subseteq \R^d$ such that there exists a linear transformation $A:\R^d \to \R^m$ such that $m_i = A(b_i)$ for all $i = 1, \ldots, t$. Further suppose that $\dim(\aff(\{m_1, \ldots, m_t\})) = \dim(\aff(\{b_1, \ldots, b_t\})) = k$ and that $m_1, \ldots, m_{k+1}$ and $b_1, \ldots, b_{k+1}$ are maximal affinely independent subsets, respectively. 

Then, for every $i > k+1$, $m_i = \lambda_1 m_1 + \ldots + \lambda_{k+1}m_{k+1}$ implies that $b_i = \lambda_1 b_1 + \ldots + \lambda_{k+1}b_{k+1}$. 
\end{prop}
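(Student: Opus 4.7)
The plan is to exploit the uniqueness of affine coordinates in an affinely independent system, combined with the fact that any linear transformation preserves affine combinations.

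First, I would argue that the coefficients $\lambda_1, \ldots, \lambda_{k+1}$ in the expression $m_i = \lambda_1 m_1 + \cdots + \lambda_{k+1} m_{k+1}$ should be interpreted (and are forced to satisfy) $\sum_j \lambda_j = 1$. Indeed, since $m_1, \ldots, m_{k+1}$ are affinely independent and lie in a $k$-dimensional affine subspace, they span that subspace, and $m_i \in \aff(\{m_1,\ldots,m_t\}) = \aff(\{m_1,\ldots,m_{k+1}\})$, so $m_i$ has a unique representation as an affine combination of $m_1,\ldots,m_{k+1}$. (If one wants to interpret the hypothesis as a general linear combination, the same conclusion holds whenever affine independence forces uniqueness of any representation restricted to the affine span.)

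Next, since $b_1, \ldots, b_{k+1}$ are also maximal affinely independent and $b_i \in \aff(\{b_1,\ldots,b_t\}) = \aff(\{b_1,\ldots,b_{k+1}\})$, there exist unique scalars $\mu_1, \ldots, \mu_{k+1}$ with $\sum_j \mu_j = 1$ such that $b_i = \mu_1 b_1 + \cdots + \mu_{k+1} b_{k+1}$. Now I would apply the linear transformation $A$ to both sides: using linearity of $A$ together with $A(b_j) = m_j$, we get
\[
m_i = A(b_i) = \sum_{j=1}^{k+1} \mu_j A(b_j) = \sum_{j=1}^{k+1} \mu_j m_j,
\]
and $\sum_j \mu_j = 1$, so this is an affine combination of $m_1,\ldots,m_{k+1}$.

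Finally, by the uniqueness of the affine coordinates of $m_i$ with respect to the affinely independent $m_1,\ldots,m_{k+1}$, one concludes $\mu_j = \lambda_j$ for every $j$, and hence $b_i = \lambda_1 b_1 + \cdots + \lambda_{k+1} b_{k+1}$, as required. There is no real obstacle here; the only subtle point is the clean setup of the uniqueness-of-affine-coordinates argument and the observation that the equal-dimension hypothesis $\dim(\aff(\{b_i\})) = \dim(\aff(\{m_i\})) = k$ is exactly what prevents $A$ from collapsing the $b_j$'s to a lower-dimensional configuration, which is what would otherwise destroy the uniqueness on the $m$-side.
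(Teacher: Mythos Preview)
The paper states this proposition without proof, treating it as a routine linear-algebraic observation; there is therefore nothing to compare against. Your argument is correct and is precisely the standard one the authors presumably have in mind: write $b_i$ as the unique affine combination $\sum_j \mu_j b_j$ of the affinely independent $b_1,\ldots,b_{k+1}$, push through $A$, and invoke uniqueness of affine coordinates on the $m$-side to conclude $\mu_j=\lambda_j$.

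One clarification worth making explicit: your parenthetical remark that ``the same conclusion holds'' for a general linear combination is not quite right, and you should simply commit to the affine interpretation. If $\sum_j \lambda_j \neq 1$ the proposition can fail. For instance, take $b_1=(1,1)$, $b_2=(2,1)$, $b_3=(3,1)$ in $\R^2$ and let $A$ be projection onto the first coordinate, so $m_j=j$ and $k=1$. Then $m_3 = 1\cdot m_1 + 1\cdot m_2$ with $\lambda_1+\lambda_2=2$, but $1\cdot b_1 + 1\cdot b_2 = (3,2)\neq b_3$. So the hypothesis must be read as an affine combination (equivalently, $\sum_j\lambda_j=1$), exactly as you argue in your first paragraph; this is also consistent with how the proposition is invoked later in the paper, where the coefficients are the affine coordinates of the columns of $M^{C_j}$.
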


The following result about projecting onto the PSD cone will be used~\cite{henrion2012projection}.

\begin{prop}\label{prop:project}
Let $C$ be an $r\times r$ symmetric matrix with spectral decomposition $U\Lambda U^T$, where $U$ is the matrix with the eigenvectors of $C$ as columns, and $\Lambda = \diag(\lambda_1, \ldots, \lambda_r)$ is the diagonal matrix with eigenvalues of $C$ on the diagonals. Then, the matrix $C^* = U\left(\diag(\max\{0,\lambda_1\}, \ldots, \max\{0,\lambda_r\})\right) U^T$ is the closest matrix in $\mathcal{S}^r_+$ to $C$ with respect to the $\|\cdot\|_2$ norm.
\end{prop}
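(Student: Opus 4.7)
The plan is to exploit two standard facts: the Frobenius norm $\|\cdot\|_2$ is unitarily invariant, and the cone $\mathcal{S}^r_+$ is closed under orthogonal conjugation. Together these let me reduce the projection of $C$ onto $\mathcal{S}^r_+$ to the projection of the diagonal matrix $\Lambda$, which I can solve coordinatewise.

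Concretely, first I would change variables: for any candidate $X \in \mathcal{S}^r_+$, set $Y := U^T X U$. Since $U$ is orthogonal, $X \in \mathcal{S}^r_+$ iff $Y \in \mathcal{S}^r_+$, and
\[ \|C - X\|_2 \;=\; \|U^T(C-X)U\|_2 \;=\; \|\Lambda - Y\|_2. \]
Hence it suffices to minimize $\|\Lambda - Y\|_2$ over $Y \in \mathcal{S}^r_+$.

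Next I would argue that an optimal $Y$ can be taken diagonal. Writing out the Frobenius norm squared,
\[ \|\Lambda - Y\|_2^2 \;=\; \sum_{i=1}^r (\lambda_i - Y_{ii})^2 \,+\, \sum_{i\neq j} Y_{ij}^2, \]
the off-diagonal part only penalizes nonzero entries. Zeroing out the off-diagonal entries of any PSD candidate $Y$ produces $\tilde{Y} := \diag(Y_{11}, \ldots, Y_{rr})$, which is still PSD (the diagonal entries of a PSD matrix are nonnegative, and a nonnegative diagonal matrix is PSD), while its Frobenius distance to $\Lambda$ is no larger than that of $Y$. Thus an optimum lies among diagonal PSD matrices.

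Finally, over $Y = \diag(d_1, \ldots, d_r)$ with $d_i \ge 0$, the objective decouples into $\sum_i (\lambda_i - d_i)^2$, minimized by $d_i = \max\{0,\lambda_i\}$. Undoing the substitution yields $X = U\,\diag(\max\{0,\lambda_1\}, \ldots, \max\{0,\lambda_r\})\,U^T = C^*$, as claimed. I do not anticipate a genuine obstacle; the one step that needs a moment of care is the reduction to diagonal $Y$, but it follows immediately from the two elementary observations about diagonals of PSD matrices noted above.
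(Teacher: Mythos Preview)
Your argument is correct and is the standard proof of this well-known fact. Note, however, that the paper does not actually prove this proposition: it simply states the result and cites~\cite{henrion2012projection} for it. So there is no ``paper's own proof'' to compare against; your self-contained argument via unitary invariance of the Frobenius norm and reduction to the diagonal case is exactly what one would expect and is entirely sound.
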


We also use the following deep result from real algebraic geometry and quantifier elimination.

\begin{theorem}\label{thm:quantifier-elimination}\cite{basu1996combinatorial}
There is an algorithm that tests the feasibility of any system of $s$ polynomial equalities involving $N$ variables with $d$ as the maximum degree of any polynomial, that runs in time $(sd)^{O(N)}$.\end{theorem}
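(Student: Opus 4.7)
The plan is to use the \emph{critical point method} from effective real algebraic geometry. First, I would reduce the system $f_1 = 0, \ldots, f_s = 0$ to a single polynomial equation $f := \sum_{i=1}^s f_i^2 = 0$ of degree at most $2d$; over $\R$ this has exactly the same solution set as the original system. Thus it suffices to test whether the real variety $V = \{x \in \R^N : f(x) = 0\}$ is nonempty.

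To detect nonemptiness of $V$, I would compute at least one sample point in each connected component of $V$. Consider the squared-distance function $\phi(x) = \|x - x_0\|^2$ to a generic $x_0 \in \R^N$, restricted to $V$. On any bounded component, $\phi$ attains its minimum, and at a minimum the Lagrange condition $\nabla f \wedge \nabla \phi = 0$ together with $f = 0$ must hold. This is a polynomial system in $N$ variables whose polynomials have degree at most $2d$, and by a Bezout-type bound it has at most $(2d)^{O(N)}$ complex solutions after a generic perturbation that makes the critical locus zero-dimensional.

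The algorithmic heart of the argument is computing and analyzing these critical points symbolically. I would use a \emph{rational univariate representation} (RUR): after a generic linear change of coordinates the zero-dimensional critical locus is encoded by a univariate polynomial $q(T)$ of degree at most $(2d)^{O(N)}$ together with $N$ polynomials expressing the coordinates as rational functions of the roots of $q$. Computing the RUR reduces to linear algebra modulo a suitable ideal in the ring $\R[x_1, \ldots, x_N]$ and fits within $(sd)^{O(N)}$ arithmetic operations. Feasibility of the original system then follows from checking, via Sturm sequences or Thom encodings, whether any real root of $q$ produces a point satisfying $f_1 = \cdots = f_s = 0$.

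The main obstacle will be handling unbounded components of $V$, where the distance function need not attain a minimum and so no critical point is produced. The standard remedy, systematized in the work of Basu--Pollack--Roy, is to introduce a positive infinitesimal $\varepsilon$ and work in the real closed extension $\R\langle \varepsilon \rangle$: one intersects $V$ with a ball of radius $1/\varepsilon$ (or equivalently deforms $f$ to $f + \varepsilon g$ for a suitable $g$), applies the critical point method there, and takes Puiseux-series limits of the sample points. Carrying the infinitesimals symbolically multiplies the running time by at most a polynomial factor and preserves the bound $(sd)^{O(N)}$. A secondary subtlety is that the naively-formed critical system may itself be positive-dimensional; this is handled by a further generic linear deformation of $\phi$ or by the polar variety construction, which cuts the critical locus down to finitely many candidate points per component without inflating the complexity.
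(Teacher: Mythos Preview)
The paper does not prove this theorem at all: it is quoted without proof as a black-box result from \cite{basu1996combinatorial} and then invoked inside the proof of Theorem~\ref{thm:arora-gen}. So there is no ``paper's own proof'' to compare against.

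That said, your sketch is a faithful outline of the argument in the cited reference. Basu, Pollack and Roy's algorithm is exactly a critical point method: one perturbs by infinitesimals to make the variety smooth and bounded in a real closed extension, computes finitely many critical points of a generic projection (or distance function) via a rational univariate representation, and then tests the sign conditions at those sample points. Your identification of the two main obstacles---unbounded components and positive-dimensional critical loci---and their resolution by infinitesimal deformation and genericity is precisely the content of that work. One minor comment: after your reduction to the single equation $\sum_i f_i^2 = 0$ the parameter $s$ no longer counts equations, so to recover the stated bound $(sd)^{O(N)}$ you should track $s$ through the bit-size of the resulting polynomial rather than through a Bezout count; alternatively, the original reference works directly with the $s$ sign conditions and obtains the $s$-dependence combinatorially. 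Either route lands on the same asymptotic bound.
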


\section{Factorizations from a polyhedron}

Our main tool for proving Theorem~\ref{thm:factor} will be the following generalization of the algorithm of Arora et al.~\cite{arora2012computing}, who proved it for the special case of $P$ being the nonnegative cone. We generalize this to an arbitrary polyhedron $P$.

\begin{theorem}\label{thm:arora-gen}
Let $M$ be an $n\times m$ matrix with nonnegative entries, and let $P$ be some polyhedron in $\R^d$ described by $p$ inequalities. If there exists a factorization $M = U V$ such that each row of $U$ and each column of $V$ is in $P$, then one can compute such a factorization in time polynomial in $n$ and $m$ (assuming $d$ and $p$ to be constants).
\end{theorem}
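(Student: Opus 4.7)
My plan is to reduce the existence of a factorization to a polynomial feasibility problem with a constant number of variables, which can then be solved using the quantifier elimination algorithm of Theorem~\ref{thm:quantifier-elimination}. This generalizes the Arora et al.\ approach for the nonnegative orthant by replacing the conic/simplicial structure of $\R^d_+$ with the affine-algebraic structure captured by Proposition~\ref{prop:invert}.

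The first step is to \emph{guess the affine structure} of the unknown factorization. Let $k_1 = \dim(\aff(\text{rows of }U))$ and $k_2 = \dim(\aff(\text{columns of }V))$; both are at most $d$. Enumerating over all values of $k_1, k_2 \in \{0, \ldots, d\}$ and over all index sets $I = \{i_1, \ldots, i_{k_1+1}\} \subseteq [n]$ and $J = \{j_1, \ldots, j_{k_2+1}\} \subseteq [m]$, I guess that $\{u_{i_l}\}$ and $\{v_{j_{l'}}\}$ are maximal affinely independent subsets of the rows of $U$ and columns of $V$ in the target factorization. The total number of guesses is $(nm)^{O(d)}$, and at least one is correct.

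For a fixed correct guess, the next step is to \emph{recover the affine dependencies from $M$ alone}. Since $M=UV$, any relation $u_i = \sum_l \lambda_l^i u_{i_l}$ with $\sum_l \lambda_l^i = 1$ forces $M_{ij} = \sum_l \lambda_l^i M_{i_l j}$ for every $j$. I solve the linear system for $\lambda^i$ using the columns indexed by $J$ (together with the affine-combination constraint), then verify consistency on the remaining columns; analogously I compute $\mu^j$ expressing $v_j$ in terms of $\{v_{j_{l'}}\}$. By Proposition~\ref{prop:invert}, these coefficients coincide with the true affine coefficients of the $u_i$ and $v_j$ whenever the guess is consistent. The only unknowns remaining are the $(k_1+1)+(k_2+1) \leq 2(d+1)$ representative vectors, giving $N = O(d^2)$ real variables. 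The constraints they must satisfy are: (i) the $(k_1+1)(k_2+1)$ bilinear equations $u_{i_l}^T v_{j_{l'}} = M_{i_l j_{l'}}$; (ii) $p$ linear inequalities per representative asserting $u_{i_l} \in P$ and $v_{j_{l'}} \in P$; and (iii) for each $i \notin I$, the linear inequalities $\sum_l \lambda_l^i u_{i_l} \in P$, and symmetrically for columns, totalling $s = O((n+m)p)$ polynomial constraints of degree at most $2$. Theorem~\ref{thm:quantifier-elimination} (in its semialgebraic form handling inequalities) then decides feasibility and produces a witness in time $((n+m)p)^{O(d^2)}$, which is polynomial in $n,m$ for constant $d,p$. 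Iterating over all guesses and outputting any witness found yields the required factorization.

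The \textbf{main obstacle} I foresee is handling degeneracies: when $V^T$ fails to be injective on $\aff(\{u_i\})$, the rows of $M$ indexed by $I$ become affinely dependent of rank strictly less than $k_1+1$, so the linear system for $\lambda^i$ is underdetermined and the clean reduction to $O(d^2)$ unknowns breaks. One must either argue that a non-degenerate factorization can be assumed whenever any factorization exists (a genericity/perturbation argument) or enlarge the enumeration with further guesses that pin down the residual affine coefficients. I suspect this is where the Fourier--Motzkin techniques mentioned in the technical overview enter, via a projection of $P$ onto an appropriate subspace (using Proposition~\ref{prop:small-desc}) so that the excess degrees of freedom are absorbed into the polynomial system without inflating the variable count beyond $O(d^2)$.
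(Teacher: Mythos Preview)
Your high-level strategy---reduce to a polynomial system in $O(1)$ variables via an affine basis and invoke Theorem~\ref{thm:quantifier-elimination}---is the paper's, and you have correctly isolated the obstruction: when $\ker(U)\neq\{0\}$ the affine coefficients $\lambda^i,\mu^j$ are \emph{not} determined by $M$, so the hypothesis of Proposition~\ref{prop:invert} fails and your reduction collapses. Neither of your proposed fixes works as stated. A perturbation cannot simultaneously preserve $M=UV$, membership in $P$, and force injectivity; and ``enlarging the enumeration'' over the real-valued $\lambda^i$ is impossible without some combinatorial structure on them, which you have not identified.

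The missing idea is Lemma~\ref{lem:partition}: starting from any factorization $M=UV$, slide each column $v^j$ within $v^j+\ker(U)$ to a point $\bar v^j$ lying on a face of $P$ of \emph{minimal} dimension, and then partition $\{1,\dots,m\}$ according to which face contains $\bar v^j$. Minimality forces $(\bar v^j+\ker(U))\cap\aff(F_{\bar v^j})=\{\bar v^j\}$, and this is precisely what yields $\dim(\aff(M^{C_j}))=\dim(\aff(\bar V^{C_j}))$ on each part, so Proposition~\ref{prop:invert} applies \emph{part by part} rather than globally. The Fourier--Motzkin bound (Proposition~\ref{prop:small-desc}) enters for a different purpose than you guessed: it shows that each part $C_j$ is carved out of the columns of $M$ by a polyhedron with at most $p^{2^d}$ facets, so the partition is a $(p^d,p^{2^d})$-polyhedral partition and can be enumerated in time polynomial in $m$ via Lemma~\ref{lem:poly-partition}. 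The resulting variable count is $O(p^d d^2)$, not $O(d^2)$---still constant, but your single global affine basis is unachievable in general.
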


In order to prove this theorem, we first need a few useful lemmas.  Let $X$ be a $p\times q$ matrix. For any subset $C \subseteq \{1, \ldots q\}$, let $X^C$ denote the matrix formed by the subset of columns indexed by $C$. Similarly, for any subset $R \subseteq \{1, \ldots p\}$, let $X_R$ denote the matrix formed by the rows indexed by $R$. 

\begin{lemma}\label{lem:partition}
Let $M$ be an $n\times m$ matrix with nonnegative entries. Let $P$ be some polyhedron in $\R^d$ described by $p$ inequalities. Suppose there exists a factorization $M = U V$ such that each row of $U$ and each column of $V$ is in $P$. Then there exists a partition $C_1 \uplus C_2 \uplus \ldots \uplus C_k = \{1, \ldots, m\}$, a partition $R_1 \uplus R_2 \uplus \ldots \uplus R_\ell = \{1, \ldots, n\}$, and matrices $\bar{U} \in \R^{n\times d}$, $\bar{V} \in \R^{d \times m}$ such that the following properties all hold:
\begin{enumerate}
\item $M = \bar{U}  \bar{V}$.
\item Each row of $\bar U$ and each column of $\bar V$ is in $P$.
\item $\dim(\aff(M^{C_j})) = \dim(\aff(\bar{V}^{C_j}))$ for all $j = 1, \ldots, k$.
\item $\dim(\aff((M_{R_i})^T)) = \dim(\aff((\bar{U}_{R_j})^T))$ for all $j = 1, \ldots, \ell$.
\item $k, \ell \leq p^d$.
\end{enumerate}


\end{lemma}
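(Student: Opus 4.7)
Plan: The strategy is to modify the given factorization $M = UV$ into $M = \bar{U}\bar{V}$ by replacing each row and column with an extreme point of a suitable fiber polyhedron, and then partition the index sets according to which minimal face of $P$ those extreme points inhabit. The extreme-point property is what forces the dimension-matching identity, and a standard face-counting argument bounds the number of parts.

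Concretely, for each column index $i$ form the fiber polyhedron $Q_i := P \cap \{v \in \R^d : Uv = M^{\cdot,i}\}$, which is non-empty since $V^{\cdot,i} \in Q_i$. Replace $V^{\cdot,i}$ by a point $\bar v_i$ lying in a minimal face of $Q_i$; when $P$ is pointed this is just a vertex, and the mild technicality of lineality is dispensed with by first quotienting $P$ by its lineality space (equivalently, the lineality directions of $P$ that are killed by $U$ act trivially on $Q_i$). The resulting matrix $\bar{V}$ satisfies $U\bar{V} = M$ with every column in $P$. Apply the symmetric construction to the rows of $U$ against $\bar{V}$ to obtain $\bar{U}$ with $\bar{U}\bar{V} = M$ and rows in $P$. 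Define the column partition by declaring $i,i' \in C_j$ iff $\bar v_i$ and $\bar v_{i'}$ share the same minimal face $F_j$ of $P$, and define $R_1,\dots,R_\ell$ analogously. Because any face of a polyhedron in $\R^d$ with $p$ facets is cut out by at most $d$ linearly independent tight inequalities, the number of distinct faces is at most $\sum_{s=0}^d \binom{p}{s} \le p^d$, which bounds both $k$ and $\ell$.

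The main content is the dimension-matching identity. Fix a class $C_j$ and its face $F_j$, cut out by tight inequality set $I_j$. Since $\bar v_i \in \relintr F_j$, the $P$-inequalities tight at $\bar v_i$ are \emph{exactly} those of $I_j$, so the active constraints of $Q_i$ at $\bar v_i$ consist of the equalities defining $\aff F_j$ together with the equalities $Uv = m_i$. Vertex-ness of $\bar v_i$ in $Q_i$ forces these constraints to have combined rank $d$, and a rank count then gives $\rk(U\vert_{\aff F_j}) = \dim \aff F_j$. Hence $U$ restricted to $\aff F_j$ is an affine bijection onto its image, pushing $\bar V^{C_j}$ bijectively onto $M^{C_j}$ and yielding
\[
\dim \aff(\bar V^{C_j}) \;=\; \dim \aff(U \bar V^{C_j}) \;=\; \dim \aff(M^{C_j}).
\]
The symmetric argument with $\bar{U}$ and $R_j$ yields the companion identity, completing the proof.

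The main obstacle is the injectivity step: it requires carefully separating the tight constraints at $\bar v_i$ into those inherited from $F_j$ versus those from $Uv = m_i$, and then leveraging genuine vertex-ness --- not merely the weaker property that $\bar v_i \in P$ --- to obtain the rank-$d$ identity and hence injectivity of $U$ on $\aff F_j$. A secondary technicality is handling possible lineality of $P$, which is standard but needs to be spelled out so that ``vertex of $Q_i$'' is meaningful; once the pointed case is established, the general case follows immediately by passing to the quotient.
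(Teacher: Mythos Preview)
Your approach is essentially the same as the paper's, just expressed in slightly different language. Where you take $\bar v_i$ to be a vertex of the fiber polyhedron $Q_i = P \cap \{v : Uv = m_i\}$, the paper takes $\bar v_i$ to be a minimal element of $(v_i + \ker U)\cap P$ under the partial order ``$p_1 \succ p_2$ iff $F_{p_1}\supsetneq F_{p_2}$''; since $Q_i = (v_i + \ker U)\cap P$, these two choices coincide (a point of $Q_i$ is a vertex exactly when $\ker U \cap W_{F} = \{0\}$ for its carrier face $F$, which is precisely the paper's minimality criterion). Your rank-count showing $U|_{\aff F_j}$ is injective is the forward form of the paper's contradiction argument (if the affine dimensions differed, a nonzero $w \in \ker U$ would lie in the direction space of $F_j$, violating minimality). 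The partition by carrier face and the face-count bound $k,\ell \le p^d$ are identical in both write-ups, and the lineality caveat you flag is implicit in the paper as well (its line-pushing step ``choose $\lambda$ so that $p+\lambda u$ lands in a strict subface'' tacitly assumes the line eventually leaves $F_p$); in the intended application $P$ is bounded, so the issue does not arise.
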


\begin{proof}

We use an idea from Arora et al.~\cite{arora2012computing} to produce $\bar U, \bar V$ and the partitions with the stated properties. Starting from $U, V$, we will first construct $\bar V$, and then use this to construct $\bar U$.  Slightly more formally, we will first construct a partition $C_1 \uplus C_2 \uplus \ldots \uplus C_k = \{1, \ldots, m\}$ and $\bar V$ such that $M = U \bar V$, all columns of $\bar V$ are in $P$, and condition 3 in the statement is satisfied.  We will then keep $\bar V$ fixed and will construct a partition $R_1 \uplus R_2 \uplus \ldots \uplus R_\ell = \{1, \ldots, n\}$ such that each row of $\bar U$ is in $P$, and condition 4 from the statement is satisfied.  Condition 5 will then be straightforward.

For any $p \in P$, let $F_p$ be the face of $P$ of minimum dimension containing $p$. This induces a partial ordering $\succ$ on the points in $P$, where $p_1 \succ p_2$ if $F_{p_1} \supsetneq F_{p_2}$. 

For every column $v^j$ of $V$, consider the set $(v^j + \ker(U))\cap P$ and define $\bar{v}^j$ to be a minimal element in this set according to this partial order. Note that for any $p \in P$, if there exists $u \in \ker(U)\setminus\{0\}$ such that the line $p + \lambda u$, $\lambda \in \R$ lies in the affine hull of $F_p$, then one can choose $\lambda$ such that $p + \lambda u$ is in a strict face of $F_p$. Thus, by the minimal choice of $\bar{v}^j$, we have that $(\bar{v}^j + \ker(U)) \cap \aff(F_{\bar{v}^j}) = \bar{v}^j$ for every $j \in \{1, \ldots, m\}$. We set $\bar V$ to be the matrix with columns $\bar{v}^j$.  Note that $M = U  \bar V$ as desired, since $U \bar v^j = U(v^j + x^j) = U v^j$ for every $j \in \{1,\ldots, m\}$, where $x^j$ is some vector in $\ker(U)$.  

The partition $C_1 \uplus C_2 \uplus \ldots \uplus C_k$ of the columns of $\bar V$ is obtained by grouping the columns together based on the face of minimum dimension that they lie on. Thus, $k \leq p^d$ which is an upper bound on the number of faces of $P$. We now need to verify that $\dim(\aff(M^{C_j})) = \dim(\aff(\bar{V}^{C_j}))$ for all $j = 1, \ldots, k$. Fix some $j$ and let the columns of $M^{C_j}$ be $\{m_0, m_1, \ldots, m_h\}$ and let the columns of $\bar{V}^{C_j}$ be $v_0, v_1, \ldots, v_h$. 
Since $M^{C_j} = U\bar{V}^{C_j}$, we know that $\dim(\aff(M^{C_j})) \leq \dim(\aff(\bar{V}^{C_j}))$. If the inequality is strict, then there exists a $v \in \ker(U)\setminus\{0\}$ such that $v = \lambda_0v_0 + \lambda_1v_1 + \ldots, \lambda_hv_h$ and $\lambda_0 + \lambda_1 + \ldots + \lambda_h = 0$. But then, if $F$ is the face of minimum dimension containing $v_0, v_1, \ldots, v_h$, we find that $v_0 + \lambda v$, $\lambda \in \R$ lies in the affine hull of $F$.  This would contradict the construction of the columns of $\bar V$. Therefore, $\dim(\aff(M^{C_j})) = \dim(\aff(\bar{V}^{C_j}))$.

In a similar manner, we can change the rows of $U$ (keeping $\bar V$ fixed) to obtain $\bar U$ so that condition $1$ still holds and condition 2 is now satisfied.  We can also construct the partition $R_1 \uplus R_2 \uplus \ldots \uplus R_\ell = \{1, \ldots, n\}$ (in the same way as the column partition) so that property 4 is satisfied.  Finally, note that the parts in the partition are in correspondence with faces of $P$ (as was the case with the column partition), giving $\ell \leq p^d$. This completes the construction. 
\end{proof}

Let $X$ be a a set of points in $\R^d$. We say a set of polyhedra $P_1, \ldots, P_k$ is a {\em polyhedral covering of $X$} if $(P_1 \cap X) \cup \ldots \cup (P_k\cap X) = X$. We say a partition $ X_1 \uplus \ldots \uplus X_k = X$ is {\em induced by a polyhedral covering} if there exists a polyhedral covering $P_1, \ldots, P_k$ of $X$ and $X_1 = P_1\cap X$ such that $X_i = (P_i \cap X_i) \setminus (X_1 \cup \ldots \cup X_{i-1})$ for $i=2, \ldots, k$. A {\em $(k_1, k_2)$-polyhedral partition of $X$} is a partition induced by a polyhedral covering of $X$ with at most $k_1$ polyhedra and each polyhedron is described by at most $k_2$ inequalities.

\begin{lemma}\label{lem:poly-partition}
Let $k_1, k_2$ be fixed natural numbers and let $X$ be a set of points in $\R^d$. The number of $(k_1,k_2)$-polyhedral partitions is at most $O((2^dm^{d})^{k_1k_2})$ and one can enumerate these partitions in time $O((2^dm^{d})^{k_1k_2})$, where $m = |X|$.
\end{lemma}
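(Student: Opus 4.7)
The plan is to reduce everything to counting and enumerating $X$-equivalence classes of closed halfspaces in $\R^d$, where two halfspaces are called $X$-equivalent if they meet $X$ in the same subset. Every polyhedron with at most $k_2$ inequalities is an intersection of $k_2$ halfspaces, and a polyhedral covering of $X$ by $k_1$ polyhedra, once an ordering of the polyhedra is fixed, uniquely determines the induced partition of $X$. Consequently, up to $X$-equivalence, a $(k_1,k_2)$-polyhedral partition is specified by a choice of $k_1 k_2$ halfspaces, a grouping of these halfspaces into $k_1$ blocks of size $k_2$, and an ordering of the blocks. Since $k_1$ and $k_2$ are fixed, the grouping and the ordering contribute only $O(1)$ factors, so the problem reduces to bounding and enumerating the halfspace equivalence classes.

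First I would show that the number of $X$-equivalence classes of closed halfspaces in $\R^d$ is at most $O(2^d m^d)$. Any closed halfspace $H=\{x:\langle a,x\rangle\le \beta\}$ can be continuously moved -- translating $\beta$ and rotating $a$ -- without changing $H\cap X$ until its bounding hyperplane coincides with $\aff(S)$ for some affinely independent $S\subseteq X$ with $|S|\le d$. There are $\sum_{i=0}^{d}\binom{m}{i}=O(m^d)$ such subsets. For each $S$ there are at most two choices of which closed side of $\aff(S)$ is the halfspace, and at most $2^{|S|}\le 2^d$ further choices -- realized by an infinitesimal tilt of the bounding hyperplane -- of which points of $S$ end up on the strict interior versus strictly outside after the tilt. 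Multiplying gives $O(2^d m^d)$ equivalence classes, and therefore the total count of $(k_1,k_2)$-polyhedral partitions is at most $O((2^d m^d)^{k_1 k_2})$.

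For enumeration, I would iterate over all subsets $S\subseteq X$ with $|S|\le d$ (of which there are $O(m^d)$), compute $\aff(S)$ and discard $S$ when $\aff(S)$ is not a hyperplane, and otherwise emit a canonical halfspace representative for each of the $O(2^d)$ orientation-plus-boundary-inclusion combinations described above. This yields an explicit list of $O(2^d m^d)$ halfspaces hitting every $X$-equivalence class. Given the list, enumerating all $(k_1,k_2)$-polyhedral partitions becomes a direct combinatorial loop: pick $k_1 k_2$ halfspaces from the list, partition them into $k_1$ blocks of $k_2$ to form polyhedra, verify that the polyhedra cover $X$, choose one of the $k_1!$ orderings of the blocks, and output the induced partition. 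Each iteration runs in time polynomial in $m$, and the total number of iterations is $O((2^d m^d)^{k_1 k_2})$, as required.

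The main obstacle is the halfspace-equivalence bound itself. Degenerate hyperplanes -- whose natural defining subset of $X$ has size smaller than $d$, or contains affinely dependent points -- must be handled without overcounting or missing classes. The tilt argument above is the cleanest route: every halfspace can be reduced, via an arbitrarily small perturbation that does not change its interaction with $X$ modulo a known boundary effect, to a canonical form supported on an affinely independent subset of $X$ of size at most $d$, and the factor of $2^d$ then precisely accounts for which boundary points the original halfspace absorbed. Once this lemma is established, the rest is bookkeeping over products of $k_1 k_2$ choices from the enumerated list.
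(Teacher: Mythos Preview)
Your proposal is correct and follows essentially the same approach as the paper: reduce to counting $X$-equivalence classes of halfspaces, bound these by $O(2^d m^d)$, then take a $k_1 k_2$-fold product to account for $k_2$ facets per polyhedron and $k_1$ polyhedra per covering. The only difference is that the paper outsources the $O(2^d m^d)$ halfspace bound to a citation (Arora et al.), whereas you sketch a direct perturbation argument for it; your sketch is standard and can be made rigorous, though the degenerate case you flag (when fewer than $d$ affinely independent points of $X$ lie on the boundary, so $\aff(S)$ is not a hyperplane) does require the additional observation that one can always continue rotating to pick up a full set of $d$ supporting points, rather than simply discarding such $S$ as your enumeration paragraph suggests.
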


\begin{proof}
Let us first count the number of subsets of $X$ of the form $P\cap X$ where $P$ is a polyhedron with at most $k_2$ inequalities. As observed in Arora et al~\cite{arora2012computing}, this can be reduced to counting the number of subsets of the form $H \cap X$ where $H$ is a halfspace. The number of such subsets is $O(2^dm^d)$ and can be enumerated in the same amount of time (as was shown in Arora et al~\cite{arora2012computing} by a simple iterative procedure). To choose a subset of the form $P\cap X$ where $P$ is a polyhedron with at most $k_2$ inequalities, one simply needs to iteratively choose $k_2$ subsets given by halfspace intersections. Thus, there are $O((2^dm^d)^{k_2})$ such subsets and these can be enumerated in this iterative fashion. 

To finally get partitions induced by polyhedral coverings, one needs to iteratively choose $k_1$ subsets of the form $P\cap X$ where $P$ is a polyhedron with at most $k_2$ inequalities. The result follows. 
\end{proof}

Using these tools, we can now prove Theorem~\ref{thm:arora-gen}.

\begin{proof}[Proof of Theorem~\ref{thm:arora-gen}]
By Lemma~\ref{lem:partition}, there exists a partition $C_1 \uplus C_2 \uplus \ldots \uplus C_k = \{1, \ldots, m\}$, a partition $R_1 \uplus R_2 \uplus \ldots \uplus R_\ell = \{1, \ldots, n\}$, and matrices $\bar{U} \in \R^{n\times d}$, $\bar{V} \in \R^{d \times m}$ such that conditions $1, 2, 3,$ and $4$ in Lemma~\ref{lem:partition} hold.


Condition $3$ from Lemma~\ref{lem:partition} and Proposition~\ref{prop:invert} imply that for each $j \in \{1,\ldots, k\}$, there exist $\dim(\aff(M^{C_j})) +1 \leq d+1$ columns of $\bar V^{C_j}$, such that every other column in $\bar V^{C_j}$ can be expressed as linear combinations of these columns. Moreover, the coefficients in these linear combinations can be computed from the columns of $M^{C_j}$. Similarly, Condition $4$ from Lemma~\ref{lem:partition} and Proposition~\ref{prop:invert} imply that for every $i=1, \ldots, \ell$, the rows of $\bar U_{R_i}$ can be expressed as linear combinations of $\dim(\aff(M_{R_j})) +1 \leq d+1$ rows of $\bar U_{R_i}$.


Moreover, for any fixed $j \in \{1,\ldots, k\}$, since $M^{\{s\}} = \bar{U}\bar{V}^{\{s\}}$ for every $s \in C_j$, we have $\{M^{\{s\}} : s\in C_j\} \subseteq \bar{U}(P)$ since every column of $\bar V$ is in $P$ by Condition $2$ from Lemma~\ref{lem:partition}. Invoking Proposition~\ref{prop:small-desc}, we obtain that 
$\bar{U}(P)$ is described using at most $p^{2^d}$ inequalities. By Lemma~\ref{lem:partition}, $k$ is bounded by $p^d$.
Therefore, $C_1, \ldots, C_k$ is a $(p^d, p^{2^d})$-polyhedral partition of $\{1, \ldots, m\}$. By Lemma~\ref{lem:poly-partition}, we can enumerate such partitions in time $O((2^dm^{d})^{p^{d+2^d}})$. Similarly, one can enumerate all possible partitions $R_1, \ldots, R_\ell$ satisfying the conditions of Lemma~\ref{lem:partition} in time $O((2^dn^{d})^{p^{d+2^d}})$.

Our algorithm will find $\bar U$ and $\bar V$ from Lemma~\ref{lem:partition}.  By conditions $1$ and $2$ of Lemma~\ref{lem:partition}, these matrices form the desired factorization of $M$.  By the discussion above, it suffices to find the partition $C_1 \uplus C_2 \uplus \ldots \uplus C_k = \{1, \ldots, m\}$, the partition $R_1 \uplus R_2 \uplus \ldots \uplus R_\ell = \{1, \ldots, n\}$ such that conditions $1, 2, 3,$ and $4$ in Lemma~\ref{lem:partition} hold, the $\dim(\aff(M^{C_j})) +1$ columns of $\bar V^{C_j}$ that form a basis for the other columns of $\bar V^{C_j}$ for each $j=1, \ldots, k$, and the $\dim(\aff(M_{R_j})) +1$ rows of $\bar U_{R_i}$ that form a basis for the other rows of $\bar U_{R_i}$ for each $i=1, \ldots, \ell$.  Once we have all of these, we can reconstruct the full $\bar U$ and $\bar V$.  

Finding these partitions and bases can be done by enumerating all possible $(p^d, p^{2^d})$-polyhedral partitions $C_1, \ldots, C_k$ of the columns of $M$, all possible $(p^d, p^{2^d})$-polyhedral partitions $R_1, \ldots, R_\ell$ of the rows of $M$, and for each choice of such partitions, introducing variables for the entries of the $\dim(\aff(M^{C_j})) +1$  special columns of $\bar V^{C_j}$, $j=1, \ldots, k$, and $\dim(\aff(M_{R_j})) +1$ special rows of $\bar U_{R_i}$, $i=1, \ldots, \ell$. Finally, set up a system of polynomial equalities in these variables that represent $M = \bar U  \bar V$. Notice that this system has only $O((k+\ell)d^2)$ variables which is a constant since $p, d$ are constants, and $k, \ell \leq p^d$. We finally invoke Theorem~\ref{thm:quantifier-elimination} to test the feasibility of such a system. Note that the requirement that $M = \bar U  \bar V$ can be expressed using $nm$ polynomial equalities, where each polynomial is a quadratic. 
\end{proof}

\section{Proof of Theorem~\ref{thm:factor}}

%

For any square matrix $X \in \R^{r\times r}$, we use $\|X\|_{sp} := \max_{y \in \R^r\setminus \{0\}} \frac{\|Xy\|_2}{\|y\|_2}$ to denote the spectral norm of $X$. The algorithm depends on this key result (paraphrased here) from~\cite{briet2014existence}.

\begin{theorem}\label{thm:rescaling}
Let $M$ be an $n\times m$ matrix with nonnegative entries. If $M$ has a PSD factorization $M=U V$ such that the rows of $U$ and columns of $V$ are in $\Sym^r_+$, then there exists a PSD factorization $M = \bar{U} \bar{V}$ such that the rows of $\bar{U}$ and the columns of $\bar{V}$ have spectral norm bounded by $\sqrt{r\| M\|_\infty}$. 
\end{theorem}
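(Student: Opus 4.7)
The plan is to exploit the congruence invariance of PSD factorizations and apply a John-ellipsoid argument. Replacing $U_i \to CU_iC^T$ and $V^j \to C^{-T}V^jC^{-1}$ for any invertible $C \in \R^{r \times r}$ preserves $M_{ij} = \operatorname{tr}(U_iV^j)$ and the PSD property, so the task reduces to selecting a single good $C$. I would first reduce to the case $\sum_i U_i \succ 0$ and $\sum_j V^j \succ 0$: otherwise the $U_i$'s (or $V^j$'s) share a common kernel, compressing to its orthogonal complement yields an equivalent factorization in a smaller $\Sym^{r'}_+$, and we recurse on $r' < r$ (the bound $\sqrt{r'\|M\|_\infty} \le \sqrt{r\|M\|_\infty}$ survives padding back). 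Writing $U_i = X_iX_i^T$ and $V^j = Y_jY_j^T$ for $r \times r$ matrices, we have $M_{ij} = \|X_i^TY_j\|_F^2$, $\|U_i\|_{sp} = \|X_i\|_{sp}^2$, $\|V^j\|_{sp} = \|Y_j\|_{sp}^2$, and the rescaling becomes $X_i \mapsto CX_i$, $Y_j \mapsto C^{-T}Y_j$.

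The key geometric objects are the symmetric (balanced) convex bodies $K_A := \conv(\bigcup_i X_i B^r)$ and $K_B := \conv(\bigcup_j Y_j B^r)$ in $\R^r$, where $B^r$ is the Euclidean unit ball. For unit vectors $u,v$, $|\langle X_iu, Y_jv\rangle| = |u^TX_i^TY_jv| \le \|X_i^TY_j\|_{sp} \le \|X_i^TY_j\|_F = \sqrt{M_{ij}} \le \sqrt{\|M\|_\infty}$, and since a bilinear form attains its supremum on a product of compact convex sets at extreme points, we get
\[
\sup_{x \in K_A,\, y \in K_B}\langle x, y\rangle \;\le\; \sqrt{\|M\|_\infty},
\]
equivalently $K_B \subseteq \sqrt{\|M\|_\infty}\, K_A^\circ$ where $K_A^\circ$ is the polar body.

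Now apply John's theorem to the symmetric body $K_A$: there is an ellipsoid $E_A$ with $E_A \subseteq K_A \subseteq \sqrt{r}\, E_A$. Choose invertible $C$ so that $CE_A = B^r$ (possible since $K_A$ has nonempty interior after our reduction; take $C$ to be the appropriate square root of the quadratic form defining $E_A$). Then $CK_A \subseteq \sqrt{r}\, B^r$, yielding $\|CX_i\|_{sp} \le \sqrt{r}$. A direct polar calculation shows $E_A^\circ = C^TB^r$, hence $C^{-T}E_A^\circ = B^r$, and combining with $K_B \subseteq \sqrt{\|M\|_\infty}\, E_A^\circ$ gives $\|C^{-T}Y_j\|_{sp} \le \sqrt{\|M\|_\infty}$. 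Squaring yields $\|U_i'\|_{sp} \le r$ and $\|V^{j'}\|_{sp} \le \|M\|_\infty$, and a final scalar rescaling $C \to \lambda C$ with $\lambda^2 = \sqrt{\|M\|_\infty/r}$ balances both bounds to exactly $\sqrt{r\|M\|_\infty}$, giving the claim. The main subtlety is recognizing that the congruence symmetry on rows ($CU_iC^T$) dualizes with the one on columns ($C^{-T}V^jC^{-1}$) precisely through polar duality, so that the John ellipsoid of $K_A$ is exactly the right device to equalize the two sides.
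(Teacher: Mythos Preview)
The paper does not give its own proof of Theorem~\ref{thm:rescaling}; the result is quoted (paraphrased) from Bri\"et, Dadush, and Pokutta~\cite{briet2014existence} and used as a black box in the analysis of Theorem~\ref{thm:factor}. So there is no in-paper proof to compare against.

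That said, your argument is correct and is essentially the proof given in the cited source. The key steps line up: the congruence action $U_i\mapsto CU_iC^T$, $V^j\mapsto C^{-T}V^jC^{-1}$ preserves the factorization and PSD property; the reduction to $\sum_i U_i\succ 0$ ensures $K_A$ is a genuine symmetric convex body; the pairing bound $\sup_{x\in K_A,\,y\in K_B}\langle x,y\rangle\le\sqrt{\|M\|_\infty}$ gives $K_B\subseteq\sqrt{\|M\|_\infty}\,K_A^\circ$; and John's theorem applied to $K_A$ furnishes the $C$ that simultaneously makes $CK_A\subseteq\sqrt{r}\,B^r$ and (via polarity, since $E_A\subseteq K_A$ implies $K_A^\circ\subseteq E_A^\circ$) $C^{-T}K_B\subseteq\sqrt{\|M\|_\infty}\,B^r$. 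The final scalar balancing is routine. One minor wording quibble: the phrase ``a bilinear form attains its supremum on a product of compact convex sets at extreme points'' is slightly loose, but what you actually need---that $\sup_{x\in\conv(S)}\langle x,y\rangle=\sup_{x\in S}\langle x,y\rangle$---is immediate, so the argument goes through.
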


We outline the steps of the algorithm in Theorem~\ref{thm:factor}. Let $f(r)$ be such that for every matrix $X \in \Sym^r$, $\|X\|_\infty \leq f(r) \|X\|_{sp}$. Such an $f(r)$ must exist because all norms are equivalent on a Euclidean space, i.e., their values are the same upto a factor depending only on the dimension of the space.

\begin{enumerate}
\item Given $M$, let $\Delta = \| M \|_\infty$. Construct a polyhedral $\epsilon$-approximation of $\Sym^r_+$ with respect to the $\|\cdot\|_\infty$ norm on $\Sym^r$ -- see Theorem~\ref{thm:eps-approx} and the Remark following. Let $P$ be the polyhedron formed by the intersection of this polyhedral approximation with the cube $\{x \in \Sym^r: \|x\|_\infty \leq f(r)\sqrt{r\Delta}\}$.

\item By Theorem~\ref{thm:rescaling} and the assumption that $M$ has PSD rank $r$, we know there exists a factorization $M=\bar{U} \bar{V}$ such that the rows of $\bar{U}$ and columns of $\bar{V}$ are in the PSD cone, and their spectral norm is at most $\sqrt{r\Delta}$. Therefore, for every row $u$ of $\bar{U}$, we have $\|u\|_\infty \leq f(r)\sqrt{r\Delta}$ and similarly for the columns of $\bar{V}$. This implies that the rows of $\bar{U}$ and columns of $\bar{V}$ are in $P$. Since $\bar{U}, \bar{V}$ exist, we can employ Theorem~\ref{thm:arora-gen} to construct a factorization $M = U'V'$ such that the rows of $U'$ and the columns of $V'$ are in $P$. Note that the algorithm of Theorem~\ref{thm:arora-gen} may not produce a PSD factorization. To obtain an approximate PSD factorization, we construct matrices $U$ and $V$ by projecting each row of $U'$ to the nearest point in the PSD cone (according to the $\|\cdot\|_\infty$ norm), and similarly for the columns of $V'$. This can be done in polynomial time by invoking Proposition~\ref{prop:project}.  
\end{enumerate}

This concludes the description of the algorithm.  It remains to prove that $$\|M - U V\|_\infty \leq \epsilon\|M\|_\infty.$$
Our first step will be to use the fact that we projected from an $\epsilon$-approximation to the PSD cone.  In particular, we know that $\lVert V'^j -V^j\rVert_\infty \leq \epsilon \|V'^j\|_\infty$ for each $j \in \{1,\ldots, m\}$, and similarly $\lVert U'_i - U_i \rVert_\infty \leq \epsilon\|U'^j\|_\infty$ for each $i \in \{1,\ldots, n\}$.  This clearly implies that 
\begin{equation} \label{eq:project}
\|V' - V\|_\infty \leq \epsilon \|V'\|_\infty \text{ and } \|U' - U\|_\infty \leq \epsilon\|U'\|_\infty.
\end{equation}

Now we can analyze the approximation of our factorization:
\begin{equation}\label{eq:analysis}\begin{array}{rcl}
\|M - UV\|_\infty & = & \|U'V' - UV\|_\infty \\
& \leq & \| U'V' - U'V\|_\infty + \| U'V - UV\|_\infty \\
& \leq & r\| U'\|_\infty\|V' - V\|_\infty + r\|U' - U\|_\infty\|V\|_\infty \\
&\leq & r\| U' \|_\infty(\epsilon \|V'\|_\infty) + r\epsilon\|U'\|_\infty\|V\|_\infty
\end{array}
\end{equation}
where the first equality is from the fact that $M = U' V'$, the first inequality is from the triangle inequality, and the third is from~\eqref{eq:project}. The second inequality follows from the observation that for any matrices $A \in \R^{n\times r}, B \in \R^{r\times m}$, $\|AB\|_\infty \leq r\|A\|_\infty\|B\|_\infty$.

Since $\| V \|_\infty \leq (1+\epsilon)\|V'\|_\infty$ because of~\eqref{eq:project}, we obtain $\|M - UV\|_\infty \leq 3\epsilon r\|U'\|_\infty\|V'\|_\infty$. Since each row $u$ of $U'$ is in $P$, we have $\|u\|_\infty\leq f(r)\sqrt{r\Delta}$. Therefore, $\|U'\|_\infty \leq f(r)\sqrt{r\Delta}$. Similarly, $\|V'\|_\infty \leq f(r)\sqrt{r\Delta}$. Hence, $\|M - UV\|_\infty \leq 3\epsilon r\|U'\|_\infty\|V'\|_\infty \leq 3f(r)r^2\epsilon\Delta.$ 
By redefining $\epsilon$ appropriately (in particular, letting $\epsilon'$ be the previous $\epsilon$ and letting  $\epsilon = 3 f(r) f \epsilon'$), we get that 
\begin{equation*}
\lVert M - U  V \rVert_{\infty} \leq \epsilon \lVert M \rVert_{\infty}
\end{equation*}
as desired.

\subsection{Computing on a Turing Machine} \label{sec:turing}

As mentioned in the introduction, the algorithm described above works in the real arithmetic model of computation.  However, this was only for ease of exposition.  We now show how to remove this assumption and work in the more standard Turing machine model of computation.

The assumption of real arithmetic was used in two places.  First, it was used when invoking Theorem~\ref{thm:quantifier-elimination} to solve a system of polynomial inequalities in the proof of Theorem~\ref{thm:arora-gen}.  The second time it was used was for computing the spectral decompositions in Proposition~\ref{prop:project} while projecting to the PSD cone in Step 2 above.

The first problem can be resolved by using a result of Grigor'ev and Vorobjov~\cite{grigor1988solving} which states that one can compute rational approximations to solutions of polynomial systems with integer coefficients within $\delta$ accuracy for any rational $\delta > 0$, in time that is polynomial in the parameters $\log(\frac{1}{\delta})$, maximum bit length of the coefficients, and $(sd)^{N^2}$, where $s$ is the number of inequalities, $d$ is the maximum degree, and $N$ is the number of variables (See ``Remark'' at the end of page 2 in~\cite{grigor1988solving}). This implies that one can find rational approximations for the rows and columns of $U'$ and $V'$ in Step 2 above, with the guarantee that $\|M - U'V'\|_\infty \leq O(\delta)$. Thus, in \eqref{eq:analysis}, the first line would be replaced by the inequality $\|M - UV\|_\infty \leq \|U'V' - UV\|_\infty + O(\delta)$, and this extra error term of $O(\delta)$ will carry through in all the subsequent inequalities in~\eqref{eq:analysis}. 

Further, although these rational approximations for the rows of $U'$ and the columns of $V'$ may not be in the polytope $P$ defined in Step 1 above, they will be within $O(\delta)$ distance of $P$. 

The problem of computing spectral decompositions to within any desired accuracy was shown to be possible in time polynomial in the size of the matrix and $\log(\frac{1}{\delta})$, where $\delta>0$ is the desired accuracy (under any matrix norm, and since for us the dimensions of these matrices are constants, i.e., $r\times r$, the choice of the norm also does not matter)~\cite{pan1999complexity}. This simply means that instead of projecting in to the closest point to the PSD cone, we instead project to some approximation of the closest point. However, this error can also be controlled. Note that the approximating point will also be in the PSD cone (it might just not be the closest one).

Thus, by keeping track of these additional error terms and defining the error parameters appropriately based on the given $\epsilon>0$, we can still keep the guarantee $\lVert M - U  V \rVert_{\infty} \leq \epsilon \lVert M \rVert_{\infty}$. 

\section{Open Questions}

Question~\ref{quest:1} remains the outstanding open question in the line of research on factorization algorithms with polynomial time guarantees. Another interesting direction would be generalize Theorem~\ref{thm:factor} to approximation guarantees with other norms. For example, the induced norms $\|M\|_{1,2} := \max_{x\in \R^m} \frac{\|Mx\|_2}{\|x\|_1}$ and $\|M\|_{\infty,2} := \max_{x\in \R^m} \frac{\|Mx\|_2}{\|x\|_\infty}$ were used in~\cite{gouveia2015approximate}. The authors show that approximate factorization with respect to these norms give rise to small SDP reformulations whose projections approximate a given polytope, where the geometric approximation is tightly determined by the approximation factor in the matrix factorization.

It would also be interesting to resolve the following question:

\begin{quote}
Let $r \in \N$ and $\epsilon > 0$ be fixed constants. Let $\mathcal{M}$ be the family of nonnegative matrices such that for every $M \in \mathcal{M}$, there exists another nonnegative matrix $\overline M$ such that $\lVert M - \overline M \rVert_\infty \leq \epsilon \lVert M\rVert_\infty$ and $\overline M$ admits a rank $r$ PSD factorization.

Does there exists an algorithm which, given any nonnegative matrix $M \in \mathcal{M}$, can find matrices $U$ and $V$ such that each row of $U$ and each column of $V$ are in $\Sym^r_+$ such that $$\| M - U V\|_\infty \leq O(\epsilon)\| M \|_\infty,$$ and has runtime polynomial in the dimensions of $M$?  In other words: if the input matrix $M$ is \emph{close} to a matrix with small PSD rank, can we find a low PSD-rank factorization that is a good approximation to $M$?\end{quote}

Approximate low-rank nonnnegative factorizations of matrices with high nonnegative rank have been extensively studied -- see~\cite{berry2007algorithms} for a survey of the diverse applications, and~\cite{arora2012computing} for a recent algorithm with provable guarantees on the complexity. The corresponding question for PSD factorizations is of similar interest.

\bibliographystyle{plain}
\bibliography{full-bib}

\end{document}